\definecolor{light-gray}{gray}{0.9}
	\newtheorem{lemma}{Lemma}%
	\newtheorem{theorem}{Theorem}%
	\newtheorem{corollary}{Corollary}%
	\newtheorem{definition}{Definition}
	\newcommand\eat[1]{}
	\newlength{\wordlength}
	\newcommand{\midd}{\mathbin{:}}
	\newcommand{\strong}{SW\xspace}
			\newcommand{\ml}[1][]{\ensuremath{\ifthenelse{\equal{#1}{}}{\mathit{ML}}{\mathit{ML}(#1)}}\xspace}
			\newcommand{\sml}[1][]{\ensuremath{\ifthenelse{\equal{#1}{}}{\mathit{SML}}{\mathit{SML}(#1)}}\xspace}
			\newcommand{\sd}[1][]{\ensuremath{\ifthenelse{\equal{#1}{}}{\mathit{SD}}{\mathit{SD}(#1)}}\xspace}
			\newcommand{\rsd}[1][]{\ensuremath{\ifthenelse{\equal{#1}{}}{\mathit{RSD}}{\mathit{RSD}(#1)}}\xspace}
			\newcommand{\rd}[1][]{\ensuremath{\ifthenelse{\equal{#1}{}}{\mathit{RD}}{\mathit{RD}(#1)}}\xspace}
			\newcommand{\st}[1][]{\ensuremath{\ifthenelse{\equal{#1}{}}{\mathit{ST}}{\mathit{ST}(#1)}}\xspace}
			\newcommand{\bd}[1][]{\ensuremath{\ifthenelse{\equal{#1}{}}{\mathit{BD}}{\mathit{BD}(#1)}}\xspace}
			\newcommand{\pc}[1][]{\ensuremath{\ifthenelse{\equal{#1}{}}{\mathit{PC}}{\mathit{PC}(#1)}}\xspace}
			\newcommand{\dl}[1][]{\ensuremath{\ifthenelse{\equal{#1}{}}{\mathit{DL}}{\mathit{DL}(#1)}}\xspace}
			\newcommand{\ul}[1][]{\ensuremath{\ifthenelse{\equal{#1}{}}{\mathit{UL}}{\mathit{UL}(#1)}}\xspace}
				\newcommand{\pref}{\ensuremath{\succsim}}
				\newcommand{\spref}{\ensuremath{\succ}}
				\newcommand{\indiff}{\ensuremath{\sim}}
\begin{document}

	\title{Characterizing SW-Efficiency\\ in the Social Choice Domain}


	\author{Haris Aziz\corref{cor1}} \ead{haris.aziz@nicta.com.au}
		\address{NICTA and UNSW, 2033 Sydney, Australia}



	\begin{abstract}
		Recently, Dogan, Dogan and Yildiz (2015) presented a new efficiency notion for the random assignment setting called SW (social welfare)-efficiency and characterized it. In this note, we generalize the characterization for the more general domain of randomized social choice. 
\end{abstract}

	\begin{keyword}
	 Social decision schemes
		\sep Social choice theory
		\sep Pareto optimal 
		\sep Social Welfare
			\sep Stochastic Dominance\\
			\emph{JEL}: C63, C70, C71, and C78
	\end{keyword}

\maketitle

\section{Introduction}

The \emph{random assignment setting} captures the scenario in which $n$ agents express preferences over $n$ objects and the outcome is a probabilistic assignment. For the the setting, two interesting efficiency notions are ex post efficiency and SD (stochastic dominance)-efficiency~\citep{AbSo03a,ABB14b,AMXY15a,BoMo01a,Cho12a,Mcle02a}. The assignment setting can be considered as a special case of voting where each discrete assignment can be viewed as a voting alternative~\citep{Aziz14c,ABBH12a,Carr10a}. 

Recently, \citet{DDY15a} presented a new notion of efficiency called \strong (social welfare)-efficiency for the random assignment setting. They characterize \strong-efficiency. In this note, we generalize the characterization to the more general voting setting. 

\section{Preliminaries}

Consider the social choice setting in which there is a set of agents $N=\{1,\ldots, n\}$, a set of alternatives $A=\{a_1,\ldots, a_m\}$ and a preference profile $\pref=(\pref_1,\ldots,\pref_n)$ such that each $\pref_i$ is a complete and transitive relation over $A$.
	We write~$a \pref_i b$ to denote that agent~$i$ values alternative~$a$ at least as much as alternative~$b$ and use~$\spref_i$ for the strict part of~$\pref_i$, i.e.,~$a \spref_i b$ iff~$a \pref_i b$ but not~$b \pref_i a$. Finally, $\indiff_i$ denotes~$i$'s indifference relation, i.e., $a \indiff_i b$ iff both~$a \pref_i b$ and~$b \pref_i a$.
The alternatives in $A$ could be any discrete structures: voting outcomes, house allocation, many-to-many two-sided matching, or coalition structures.	
A utility profile $u=(u_1,\ldots, u_n)$ specified for each agent $i\in N$ his utility for $u_i(a)$ for each alternative $a\in N$. A utility profile is \emph{consistent} with the preference profile $\pref$, if for each $i\in N$ and $a,b\in A$, $u_i(a)\geq u_i(b)$ if $a\pref_i b$.
Two alternatives $a,b\in A$ are \emph{Pareto indifferent} if $a\sim_i b$ for all $i\in N$. For any alternative $a\in A$, we will denote by $D(a)$ the set $\{b\in A\midd \exists i\in N, a\succ_i b\}$.

We will also consider randomized outcomes that are lotteries over $A$.
A lottery is a probability distribution over $A$. We denote the set of lotteries by $\Delta(A)$. For a lottery $p\in \Delta(A)$, we denote by $p(a)$ the probability of alternative $a\in A$ in lottery $p$. 
We denote by $supp(p)$ the set $\{a\in A\midd p(a)>0\}$.
		A lottery $p$ is \emph{interesting} if  there exist $a,b\in supp(p)$ such that there exist $i,j\in N$ such that $a\succ_i b$ and $b\succ_j a$. A lottery is \emph{degenerate} if it puts probability one on a single alternative.

		Under \emph{stochastic dominance (SD)}, an agent prefers a lottery that, for each alternative $x \in A$, has a higher probability of selecting an alternative that is at least as good as $x$. Formally, $p \pref_i^{\sd} q$ iff $\forall y\in\nolinebreak A \colon \sum_{x \in A: x \pref_i y} p(x) \geq \sum_{x \in A: x \pref_i y} q(x).$ 
It is well-known that $p \pref^{\sd} q$ iff $p$ yields at least as much expected utility as $q$ for any von-Neumann-Morgenstern utility function consistent with the ordinal preferences \cite{ABBH12a,Cho12a}.	
A lottery is \emph{SD-efficient} if it is Pareto optimal with respect to the SD relation.


\section{\strong-efficiency}

We now consider \strong-efficiency as introduced by  \citet{DDY15a}.
Although \citet{DDY15a} defined \strong-efficiency in the context of random assignment, the definition extends in a straightforward manner to the case of voting. 

\begin{definition}[SW-efficiency]
	A lottery $p$ is \emph{\strong-efficient} if there exists no other lottery $q$ that \strong dominates it. Lottery $q$ \strong dominates $p$ if 
for any utility profile for which $p$ maximizes welfare, $q$ maximises welfare, and there exists  at least one utility profile for which $q$ maximised welfare but $p$ does not. 
\end{definition}

\begin{lemma}\label{lemma:D}
	Consider a Pareto optimal alternative $a\in A$ and a non-empty set $D(a)=\{b\in A\midd \exists i\in N, a\succ_i b\}$. Then, there exists a utility profile $u$ such that $\sum_{i\in N}u_i(a)>\sum_{i\in N}u_i(b)$ for all  $b\in D(a)$.
		\end{lemma}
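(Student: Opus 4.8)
The plan is to avoid treating each $b \in D(a)$ separately and instead exhibit one explicit utility profile that dominates all of them at once. The candidate is the upper-contour indicator of $a$: for every agent $i$ and every alternative $x$, set $u_i(x) = 1$ if $x \succsim_i a$ and $u_i(x) = 0$ otherwise. The intuition is that $a$ scores a full $1$ from every agent by construction, whereas any $b \in D(a)$ must lose at least one point, namely from an agent who strictly prefers $a$ to $b$.

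First I would check that this $u$ is a legitimate (consistent) utility profile, i.e.\ that each $u_i$ is weakly monotone along $\succsim_i$: whenever $x \succsim_i y$ we need $u_i(x) \geq u_i(y)$. The only way monotonicity could fail is $u_i(x) = 0$ and $u_i(y) = 1$, meaning $a \succ_i x$ while $y \succsim_i a$; but then transitivity gives $y \succsim_i a \succ_i x$, contradicting $x \succsim_i y$. Hence $u$ is consistent with $\succsim$.

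Next I would compute the two sides. Since $a \succsim_i a$ for every $i$, we have $u_i(a) = 1$ for all $i$, so $\sum_{i \in N} u_i(a) = n$. Now fix any $b \in D(a)$: by definition there is an agent $i_0$ with $a \succ_{i_0} b$, so $b \succsim_{i_0} a$ fails and $u_{i_0}(b) = 0$; as every remaining term is at most $1$, $\sum_{i \in N} u_i(b) \leq n-1 < n = \sum_{i \in N} u_i(a)$. Since $b$ was arbitrary, the single profile $u$ settles every element of $D(a)$.

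The step I had expected to be the obstacle is exactly this simultaneity: the naive approach of inflating the gap $u_i(a) - u_i(b)$ for the agents who rank $a$ above $b$ must be reconciled across different $b$, since the set of agents to reward changes with $b$ and flattening an agent for one $b$ can spoil another. The indicator construction sidesteps this, as it is defined once and uses only the defining feature of $D(a)$. I would also remark that Pareto optimality of $a$ is not actually used in the displayed inequality; its role is interpretive, guaranteeing that every alternative outside $D(a)$ is Pareto indifferent to $a$, so that beating all of $D(a)$ is the same as $a$ being a welfare maximizer uniquely up to Pareto indifference. Finally, if one insists on the strict reading of consistency ($a \succ_i b \Rightarrow u_i(a) > u_i(b)$), I would pass to $u + \varepsilon w$ for a strictly consistent $w$ and a sufficiently small $\varepsilon > 0$: this makes each $u_i$ strictly monotone while, because each total-welfare gap above is an integer at least $1$, preserving all the strict inequalities $\sum_{i \in N} u_i(a) > \sum_{i \in N} u_i(b)$.
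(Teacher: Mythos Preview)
Your proof is correct and takes a cleaner route than the paper's. The paper argues informally: whenever $a \succ_i b$, make $u_i(a) - u_i(b)$ ``huge'', and whenever $b \succ_j a$, make $u_j(b) - u_j(a)$ ``arbitrarily small'', so that the single large favorable gap outweighs the accumulated small losses. Your upper-contour indicator $u_i(x) = 1$ if $x \succsim_i a$ and $0$ otherwise is an explicit realization of this idea taken to its limit (favorable gap $=1$, unfavorable gap $=0$), and it removes any need to tune parameters or worry about reconciling the construction across different $b \in D(a)$---precisely the simultaneity issue you flagged about the naive approach and which the paper's sketch leaves implicit. What the paper's argument buys is intuition; what your construction buys is a single, parameter-free profile with a one-line welfare computation. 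Your side remarks---that Pareto optimality of $a$ is not used for the displayed inequality itself, and the $\varepsilon$-perturbation to upgrade to strictly consistent utilities---are both correct and go beyond what the paper provides.
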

		\begin{proof}
			We can construct the require utility function profile $u$ as follows.
Whenever $a\succ_i b$, make the difference $u_i(a)-u_i(b)$ huge. Whenever $b\succ_j a$, make the difference $u_j(b)-u_j(a)$ arbitrarily small. Hence the value $u_i(a)-u_i(b)$ is large enough that it makes up for all $j$ for which $u_j(b)-u_j(a)>0$. Hence $\sum_{i\in N}u_i(a)-u_i(n)>0$. 
			\end{proof}

		\begin{lemma}\label{lemma:effrelations}
\strong-efficiency implies SD-efficiency implies ex post efficiency. 
			\end{lemma}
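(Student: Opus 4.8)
The plan is to establish the two implications separately, each by contraposition. For the implication \emph{SD-efficiency $\Rightarrow$ ex post efficiency}, I would suppose that $p$ is not ex post efficient, so that some $b\in supp(p)$ is Pareto dominated by an alternative $a$, i.e.\ $a\pref_i b$ for all $i$ and $a\spref_j b$ for some $j$. Form $q$ from $p$ by transferring all the mass $p(b)$ onto $a$. Shifting probability to a weakly preferred alternative never decreases any cumulative sum $\sum_{x\pref_i y}(\cdot)$, so $q\pref_i^{\sd}p$ for every $i$; reading this sum at the threshold $y=a$ for agent $j$ gives a strict increase of $p(b)>0$, since $a\spref_j b$ means $b$ is not counted, whence $q\spref_j^{\sd}p$. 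Thus $q$ SD-dominates $p$ and $p$ is not SD-efficient.

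For \emph{\strong-efficiency $\Rightarrow$ SD-efficiency} I would again argue contrapositively: assuming $p$ is not SD-efficient, I must produce a lottery that \strong-dominates it. Fix $q$ with $q\pref_i^{\sd}p$ for all $i$ and $q\spref_j^{\sd}p$ for some $j$. One half of \strong-dominance is immediate from the expected-utility characterization of SD recalled in the preliminaries: every consistent utility profile assigns each agent at least as much expected utility from $q$ as from $p$, and hence assigns $q$ at least as much total welfare $\sum_{i\in N}u_i(\cdot)$; therefore every profile under which $p$ maximizes welfare also has $q$ maximizing welfare.

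The crux, and the step I expect to be hardest, is the second half: exhibiting a \emph{single} profile under which the dominating lottery is welfare-maximal while $p$ is not. The strict relation $q\spref_j^{\sd}p$ only supplies a consistent profile $u$ with $\sum_{i\in N}u_i(q)>\sum_{i\in N}u_i(p)$ (take the $\{0,1\}$-valued utility for $j$ witnessing the strict stochastic-dominance inequality and let every other agent be indifferent); under such $u$ the lottery $p$ fails to maximize welfare, but $q$ itself need not be welfare-maximal, so $q$ alone need not \strong-dominate $p$. The two requirements pull against each other: the first half wants the dominating lottery to be welfare-maximal \emph{whenever} $p$ is, while the second wants it to be so on a strictly larger set of profiles.

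To reconcile them I would stop insisting on $q$ and instead dominate $p$ by a degenerate lottery on a suitable Pareto optimal alternative $a^\ast$: one that is welfare-maximal on every profile where all of $supp(p)$ is (securing the first half) yet welfare-maximal strictly more often (securing the second). Here \lemref{lemma:D} is the key tool, since for a Pareto optimal $a^\ast$ it furnishes a consistent profile assigning $a^\ast$ strictly greater welfare than every alternative of $D(a^\ast)$; this lets me separate the welfare of $a^\ast$ from that of a suitable member of $supp(p)$ and so realize a profile where $a^\ast$ maximizes welfare but $p$ does not. The delicate point that remains is to argue that such an $a^\ast$ can always be found among the alternatives that are forced to be welfare-maximal alongside $supp(p)$; this is precisely where the failure of SD-efficiency of $p$ must be used, and I expect it to lean on the structural analysis of welfare-maximizing supports underlying the main characterization.
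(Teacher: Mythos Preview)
The paper's proof is far terser than yours and does not engage with the difficulty you single out. It cites SD\,$\Rightarrow$\,ex post as well known, and for SW\,$\Rightarrow$\,SD argues in one line: if $q$ SD-dominates $p$ then ``$p$ does not maximize welfare for any consistent utility profile because $q$ yields more utility for each utility profile.'' Read literally this is false (constant utilities are consistent and make every lottery welfare-maximal), and read charitably it yields only your condition~(a), not~(b). Your worry is well founded: with a single agent, $a\spref b\spref c$, $p=\delta_c$ and $q=\tfrac12\delta_a+\tfrac12\delta_c$, the lottery $q$ SD-dominates $p$ yet is welfare-maximal only at constant profiles, where $p$ is welfare-maximal too; so $q$ does not SW-dominate $p$ (though $\delta_a$ does). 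The paper simply does not address this point.

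So you have correctly located a gap that the paper glosses over, but your own proposal does not close it either. Your suggested repair---replace $q$ by a degenerate lottery on a Pareto optimal $a^\ast$ and invoke \lemref{lemma:D}---is a reasonable instinct (it is essentially the mechanism behind \lemref{lemma:neccA}), but you neither specify which $a^\ast$ to take nor verify condition~(a) for it, and you explicitly defer the remaining work to ``the structural analysis \ldots\ underlying the main characterization.'' That deferral is circular here: \lemref{lemma:effrelations} is invoked in the proof of \lemref{lemma:neccA}, which in turn feeds into \thmref{th:main}, so you cannot lean on the main characterization to finish this lemma. In short, your treatment is more scrupulous than the paper's at exactly the delicate step, but it remains a sketch at that step rather than a proof.
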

			\begin{proof}
				It is well-known that SD-efficiency implies ex post efficiency~\citep{ABBH12a}. 
				
				Consider a lottery $p$ that is not SD-efficient. Then there exists another lottery $q$ that SD-dominates it. Hence $p$ does not maximize welfare for any consistent utility profile because $q$ yields more utility for each utility profile. 
				\end{proof}

			\begin{lemma}\label{lemma:neccA}
				An interesting lottery is not \strong-efficient.
				\end{lemma}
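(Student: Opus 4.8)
The plan is to exhibit, for any interesting lottery $p$, a degenerate lottery $q=\delta_{a'}$ (all probability on a single alternative $a'$) that \strong dominates it. The starting observation is that the expected social welfare of a lottery $r$ under a utility profile $u$ equals $\sum_{x\in A} r(x) W_u(x)$, where $W_u(x)=\sum_{i\in N} u_i(x)$; since this is an average of the numbers $W_u(x)$ weighted by $r$, the lottery $r$ maximizes welfare under $u$ exactly when $supp(r)\subseteq \arg\max_{x} W_u(x)$. I would use this support characterization throughout.

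Since $p$ is interesting, fix $a,b\in supp(p)$ and $i,j\in N$ with $a\succ_i b$ and $b\succ_j a$. First I would promote $a$ to a Pareto optimal alternative: let $a'$ be a Pareto optimal alternative that weakly Pareto dominates $a$ (taking $a'=a$ if $a$ is already Pareto optimal; such an $a'$ exists because improving along the Pareto relation terminates over finitely many alternatives). Two facts about $a'$ drive the argument. From $a'\pref_k a$ for all $k$ together with $a\succ_i b$ one obtains $a'\succ_i b$ by transitivity, so $b\in D(a')$; in particular $D(a')$ is non-empty and meets $supp(p)$. And because $a'$ is Pareto optimal, every alternative $c\notin D(a')$ satisfies $c\pref_k a'$ for all $k$, which forces $c\sim_k a'$ for all $k$ (a strict preference would contradict Pareto optimality of $a'$); hence $W_u(c)=W_u(a')$ for every consistent $u$ and every such $c$.

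I would then verify the two \strong-domination conditions for $q=\delta_{a'}$. For the first, suppose $p$ maximizes welfare under some consistent $u$, so all of $supp(p)$ attains the maximal total welfare $W^*$. Since $a'$ weakly Pareto dominates $a\in supp(p)$, consistency gives $W_u(a')\geq W_u(a)=W^*$, and $W_u(a')\leq W^*$ by maximality, so $W_u(a')=W^*$ and $\delta_{a'}$ maximizes welfare. For the second, I would apply \lemref{lemma:D} to the Pareto optimal alternative $a'$ with non-empty $D(a')$ to obtain a consistent $u$ with $W_u(a')>W_u(c)$ for all $c\in D(a')$; combined with $W_u(c)=W_u(a')$ for $c\notin D(a')$, this makes $W_u(a')$ the global maximum, so $\delta_{a'}$ maximizes welfare, while $b\in D(a')\cap supp(p)$ has $W_u(b)<W_u(a')$, so $p$ does not. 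Finally $q\neq p$, since an interesting lottery has at least two alternatives in its support and is therefore non-degenerate.

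The step I expect to be the main obstacle is the first domination condition, i.e.\ guaranteeing that $\delta_{a'}$ never strictly loses to $p$: this is precisely why $a'$ must be chosen to Pareto dominate an element of $supp(p)$ rather than being an arbitrary Pareto optimal alternative, and why reducing from the possibly non-optimal disputed alternative $a$ to a Pareto optimal $a'$ above it is the crucial move that lets \lemref{lemma:D} apply while keeping the support comparison under control.
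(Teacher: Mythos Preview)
Your proof is correct and in spirit close to the paper's, but the execution differs in one structural point worth noting. The paper first disposes of the case where $p$ is not SD-efficient via \lemref{lemma:effrelations}, and then in the remaining case uses ex post efficiency of $p$ to conclude that the disputed alternative $a\in supp(p)$ is itself Pareto optimal, so that \lemref{lemma:D} applies directly to $a$; the dominating lottery is then (implicitly) $\delta_a$. You avoid this case split entirely by lifting $a$ to a Pareto optimal $a'$ above it and working with $\delta_{a'}$ throughout. Both routes hinge on the same engine (\lemref{lemma:D} plus the observation that alternatives outside $D(\cdot)$ of a Pareto optimal point are Pareto indifferent to it), but your version is self-contained and, notably, spells out both halves of the SW-domination definition and the $q\neq p$ check, whereas the paper's argument only exhibits a profile where $\sum_i u_i(a)>\sum_i u_i(p)$ and leaves the identification of the dominating lottery and the ``whenever $p$ maximizes, so does $q$'' direction to the reader.
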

				\begin{proof}
					If an interesting lottery $p$ is not SD-efficient, we are already done because by Lemma~\ref{lemma:effrelations}, $p$ is not \strong-efficient. So let us assume $p$ is SD-efficient and hence ex post efficient. Since $p$ is interesting, there exists at least one $a\in supp(p)$ such that $a\succ_i b$ for some $b\in supp(p)$ and $i\in N$. Note that $a$ is Pareto optimal. By Lemma~\ref{lemma:D}, there exists a utility profile $u$ such that $\sum_{i\in N}u_i(a)>\sum_{i\in N}u_i(b)$ for all  $b\in D(a)$ where $D(a)\cap supp(p)\neq \emptyset$. 
Hence, there exists a utility profile $u$ such that $\sum_{i\in N}u_i(a)>\sum_{i\in N}u_i(b)$ for all  $b\in supp(p)\cap D(a)$. This means that $\sum_{i\in N}u_i(a)> \sum_{i\in N}u_i(p)$.
					\end{proof}

						
						\begin{lemma}\label{lemma:suff}
							An uninteresting lottery over Pareto optimal alternatives is \strong-efficient.
							\end{lemma}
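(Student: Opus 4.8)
The plan is to prove that no lottery can SW-dominate the given lottery $p$. The crux is a structural fact about its support: any two alternatives $a,b\in supp(p)$ are Pareto indifferent. Since $p$ is uninteresting, for each such pair it cannot happen that one agent strictly prefers $a$ to $b$ while another strictly prefers $b$ to $a$; hence one of the two weakly Pareto dominates the other. As both $a$ and $b$ are Pareto optimal, this weak dominance cannot be strict for any agent, forcing $a\indiff_i b$ for every $i\in N$. By consistency this gives $u_i(a)=u_i(b)$ for all $i$, so $\sum_{i\in N}u_i(\cdot)$ is constant on $supp(p)$ under any utility profile, and the welfare of $p$ equals this common value. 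In particular, since the welfare of a lottery is a convex combination of the welfares of its support, $p$ maximizes welfare exactly when this common value is the global maximum over $A$.

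Fix $a\in supp(p)$; by the above $D(a)$ is the same for every alternative of $supp(p)$. I first dispose of the degenerate case $D(a)=\emptyset$: then every alternative weakly Pareto dominates $a$, and Pareto optimality of $a$ makes all of $A$ Pareto indifferent to $a$. Welfare is then constant over all of $A$, every lottery maximizes welfare under every profile, and the second requirement of SW-domination---a profile where the dominator maximizes welfare but $p$ does not---can never be satisfied. So $p$ is trivially \strong-efficient, and I may henceforth assume $D(a)\neq\emptyset$.

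The main step applies Lemma~\ref{lemma:D}. Each $c\neq a$ either lies in $D(a)$ (some agent strictly prefers $a$ to $c$) or satisfies $c\pref_i a$ for all $i$, in which case Pareto optimality of $a$ forces $c$ to be Pareto indifferent to $a$; writing $PI(a)$ for this indifference class, we obtain the partition $A=\{a\}\cup PI(a)\cup D(a)$. Lemma~\ref{lemma:D} supplies a utility profile $u$ with $\sum_{i\in N}u_i(a)>\sum_{i\in N}u_i(c)$ for every $c\in D(a)$, while every $c\in PI(a)$ shares $a$'s welfare; thus the welfare-maximizing alternatives under $u$ are exactly $\{a\}\cup PI(a)\supseteq supp(p)$, so $p$ maximizes welfare under $u$.

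Finally, suppose for contradiction that $q$ \strong-dominates $p$. The first requirement forces $q$ to maximize welfare under this same $u$, so $supp(q)\subseteq\{a\}\cup PI(a)$; that is, every alternative of $supp(q)$ is Pareto indifferent to $a$. But then, for every utility profile, the welfare of $q$ equals $\sum_{i\in N}u_i(a)$, which is also the welfare of $p$. Hence $p$ and $q$ carry identical welfare under every profile, so no profile can have $q$ maximizing welfare while $p$ does not, contradicting the second requirement. Therefore no dominator exists and $p$ is \strong-efficient. I expect the main obstacle to be the crux of the first paragraph: converting ``uninteresting over Pareto optimal alternatives'' into the statement that the entire support lies in a single Pareto-indifference class is exactly what lets Lemma~\ref{lemma:D} identify the welfare-maximizing set and trap any candidate dominator inside the same class.
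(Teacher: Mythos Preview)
Your proof is correct and follows essentially the same route as the paper's: both hinge on the observation that the support of an uninteresting lottery over Pareto optimal alternatives lies in a single Pareto-indifference class, and both use (explicitly in your case, implicitly in the paper's terse sketch) Lemma~\ref{lemma:D} to manufacture a separating utility profile. Your version is considerably more detailed---handling the edge case $D(a)=\emptyset$ and spelling out why any candidate dominator $q$ must have $supp(q)$ inside the same indifference class---and you derive the contradiction from the second clause of SW-dominance whereas the paper targets the first, but the underlying mechanism is identical.
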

							\begin{proof}
								An uninteresting lottery $p$ over Pareto optimal alternatives is SD-efficient. Assume that there is another lottery $q$ that \strong-dominates $p$. Then $supp(q)$ contains one alternative $b$ that is not Pareto indifferent to alternatives $supp(p)$. This means that there exists a utility profile $u$ such that welfare is maximized by $p$ but not by $b$. Hence $q$ does not \strong-dominate $p$.
								\end{proof}

								\begin{theorem}\label{th:main}
									 A lottery is \strong-efficient iff it is ex post efficient and uninteresting.
									\end{theorem}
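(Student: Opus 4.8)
The plan is to assemble the preceding lemmas into the two directions of the biconditional, since each half of the equivalence is essentially already packaged in one or two of them, with no new construction required.

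For the forward direction, I would suppose a lottery $p$ is \strong-efficient and extract the two claimed properties separately. By Lemma~\ref{lemma:effrelations}, \strong-efficiency implies SD-efficiency, which in turn implies ex post efficiency; hence $p$ is ex post efficient. For uninterestingness I would argue by contraposition: Lemma~\ref{lemma:neccA} states that every interesting lottery fails to be \strong-efficient, so a \strong-efficient lottery cannot be interesting, i.e. $p$ is uninteresting. This settles the ``only if'' part.

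For the converse, suppose $p$ is ex post efficient and uninteresting. The key observation is that ex post efficiency is, by definition, the statement that $supp(p)$ consists solely of Pareto optimal alternatives, since a lottery placing positive weight on a Pareto-dominated alternative could not itself be a convex combination of Pareto optimal outcomes. Thus $p$ is an uninteresting lottery over Pareto optimal alternatives, and Lemma~\ref{lemma:suff} immediately yields that $p$ is \strong-efficient. Combining the two directions gives the theorem.

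The step that demands the most care is the identification in the converse between ``ex post efficient'' and ``supported on Pareto optimal alternatives,'' because the hypothesis of Lemma~\ref{lemma:suff} is phrased in the latter language while the theorem is stated in the former; once this matching is made explicit the argument is a direct citation of the lemmas. I do not anticipate a genuine obstacle beyond keeping this definitional bookkeeping straight, as all the substantive content, namely the welfare-profile constructions, has already been discharged in Lemmas~\ref{lemma:D} through~\ref{lemma:suff}.
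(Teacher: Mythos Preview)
Your proposal is correct and follows essentially the same route as the paper: both directions are obtained by direct citation of Lemmas~\ref{lemma:effrelations}, \ref{lemma:neccA}, and \ref{lemma:suff}, with the forward direction argued by contraposition. Your explicit remark that ``ex post efficient'' must be matched with ``supported on Pareto optimal alternatives'' in order to invoke Lemma~\ref{lemma:suff} is a point the paper leaves implicit, so if anything you are slightly more careful.
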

									\begin{proof}
										By Lemma~\ref{lemma:suff}, an ex post efficiency and uninteresting lottery is \strong-efficient. 
										
										We now prove that if lottery is not ex post efficient or uninteresting, it not \strong-efficient. 
Due to Lemma~\ref{lemma:effrelations}, if a lottery is not ex post efficient, it is not \strong-efficient. Similarly, by Lemma~\ref{lemma:neccA}, if a lottery is interesting, it is not \strong-efficient. 
										\end{proof}

	%
	%
	%


	\begin{lemma}\label{lemma:degen}
		If $A$ contains no Pareto indifferent alternatives, then 
	if a lottery is uninteresting and not degenerate, then it is not ex post efficient.	
		\end{lemma}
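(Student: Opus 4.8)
The plan is to locate a Pareto-dominated alternative inside the support of the lottery, which directly violates ex post efficiency. Since the lottery $p$ is not degenerate, its support contains at least two distinct alternatives; I would fix two such alternatives $a,b\in supp(p)$. The assumption that $A$ contains no Pareto indifferent alternatives guarantees that $a$ and $b$ are not Pareto indifferent, so some agent ranks them strictly. Relabelling if necessary, I may assume there is an agent $i$ with $a\spref_i b$.

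Next I would exploit the hypothesis that $p$ is uninteresting. By definition this forbids, for every pair of alternatives in $supp(p)$, the simultaneous existence of one agent strictly preferring the first to the second and another agent strictly preferring the second to the first. Applied to the pair $a,b$, and using that we already have $a\spref_i b$, this rules out the existence of any agent $j$ with $b\spref_j a$. Consequently $a\pref_k b$ holds for every agent $k$, and this preference is strict for at least agent $i$; that is, $a$ Pareto dominates $b$.

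Finally, because $b$ is Pareto dominated it is not Pareto optimal, yet $b\in supp(p)$, so $p$ assigns positive probability to a non-Pareto-optimal alternative and is therefore not ex post efficient.

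The proof is essentially a direct unfolding of the definitions, so I do not expect a substantial obstacle. The only step demanding care is the correct logical reading of the uninteresting condition together with the no-Pareto-indifference assumption: one must argue that a strict preference in one direction, combined with the impossibility of a conflicting strict preference, upgrades to a full Pareto domination rather than mere weak dominance. This is precisely where the no-Pareto-indifference hypothesis earns its keep, since it supplies the initial strict comparison between $a$ and $b$ that the uninteresting condition then propagates across all agents.
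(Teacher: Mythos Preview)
Your proposal is correct and follows essentially the same argument as the paper: pick two distinct alternatives in the support, use the no-Pareto-indifference hypothesis to obtain a strict preference in one direction, and use the uninteresting condition to rule out any strict preference in the opposite direction, yielding Pareto domination within $supp(p)$. The paper phrases this as a trichotomy (one dominates the other, or they are Pareto indifferent) and excludes the third case, whereas you unfold the same reasoning more explicitly; the content is identical.
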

		\begin{proof}
			Assume that a lottery $p$ is uninteresting and not degenerate. 
			Since $p$ is not degenerate, $|supp(p)|\geq 2$. Since $p$ is uninteresting, there do not exist $a,b\in supp(p)$ such that there exist $i,j\in N$ such that $a\succ_i b$ and $b\succ_j a$. Thus either $a$ Pareto dominates $b$, or $b$ or Pareto dominates $a$ or $a$ and $b$ are Pareto indifferent. The third case is not possible because we assumed that $A$ does not contain Pareto indifferent alternatives. Since $a$ Pareto dominates $b$ or $b$ or Pareto dominates $a$, $supp(p)$ contains a Pareto dominated alternative. Hence $p$ is not ex post efficient.
					\end{proof}

	\begin{theorem}
		If $A$ contains no Pareto indifferent alternatives, then  a lottery is \strong-efficient iff it is ex post efficient and degenerate.
		\end{theorem}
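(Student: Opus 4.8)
The plan is to derive this refined characterization directly from the general characterization in \thmref{th:main} together with \lemref{lemma:degen}, exploiting the extra hypothesis that $A$ contains no Pareto indifferent alternatives. Recall that \thmref{th:main} already tells us that \strong-efficiency coincides with being ex post efficient and uninteresting; the task is therefore only to show that, under the no-Pareto-indifference assumption, the qualifier ``uninteresting'' can be upgraded to ``degenerate'' for precisely the lotteries that are ex post efficient.

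For the forward direction, I would assume $p$ is \strong-efficient and apply \thmref{th:main} to conclude that $p$ is ex post efficient and uninteresting. It then remains to argue that $p$ is degenerate. Here I would invoke the contrapositive of \lemref{lemma:degen}: since $p$ is both uninteresting and ex post efficient, it cannot be non-degenerate, and hence $p$ must be degenerate. This yields that $p$ is ex post efficient and degenerate, as required.

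For the converse, I would start from a lottery $p$ that is ex post efficient and degenerate. Since a degenerate lottery places probability one on a single alternative, $supp(p)$ is a singleton, so there are no two distinct alternatives in its support and $p$ is trivially uninteresting. Applying \thmref{th:main} to the ex post efficient and uninteresting lottery $p$ then gives that $p$ is \strong-efficient, closing the equivalence.

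The argument is essentially a bookkeeping combination of the two earlier results, so I do not expect a genuine obstacle. The only point that needs care is the direction in which \lemref{lemma:degen} is applied: the lemma is stated as ``uninteresting and not degenerate implies not ex post efficient,'' and the forward direction relies on its contrapositive, namely that an ex post efficient uninteresting lottery must be degenerate. The no-Pareto-indifference hypothesis enters solely through \lemref{lemma:degen}, which is exactly the place where it is required.
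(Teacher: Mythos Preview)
Your proposal is correct and follows exactly the paper's own argument: both directions reduce to \thmref{th:main}, with \lemref{lemma:degen} (in contrapositive form) supplying degeneracy in the forward direction and the observation that degenerate lotteries are trivially uninteresting handling the converse. There is nothing to add.
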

		
	\begin{proof}
		Assume that $A$ contains no Pareto indifferent alternatives.
		If a lottery $p$ is \strong-efficient, then by Theorem~\ref{th:main}, it is ex post efficient and uninteresting. By Lemma~\ref{lemma:degen}, since $p$ is ex post efficient, it is degenerate. 
		
		Now assume that a lottery $p$ is ex post efficient and degenerate. Since $p$ is degenerate, it is uninteresting by definition. Since it is both ex post efficient and uninteresting, then by Theorem~\ref{th:main}, it is \strong-efficient.
		\end{proof}

		\begin{lemma}
			An assignment problem with strict preferences does not admit Pareto indifferent discrete assignments. 
			\end{lemma}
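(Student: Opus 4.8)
The plan is to exploit the defining feature of an assignment problem: each agent cares only about the object she herself receives, so her preference over whole assignments is simply inherited from her preference over objects. First I would fix notation for this domain, writing $a(i)$ for the object allocated to agent $i$ under a discrete assignment $a$ (a discrete assignment being a matching that gives each agent exactly one object). The induced ordinal preference over assignments is then $a \pref_i b$ iff $a(i)$ is ranked at least as high as $b(i)$ in agent $i$'s ranking of the objects; in particular $a \sim_i b$ iff $i$ is indifferent between the objects $a(i)$ and $b(i)$.

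Next I would take two \emph{distinct} discrete assignments $a$ and $b$ and show they cannot be Pareto indifferent. Since $a \neq b$ as matchings, there is at least one agent $i$ whose allocation differs, i.e.\ $a(i) \neq b(i)$. Because the preferences are strict, any two distinct objects are strictly ranked, so either $a(i) \succ_i b(i)$ or $b(i) \succ_i a(i)$. In either case it is not true that $a \sim_i b$, and hence $a$ and $b$ fail to be Pareto indifferent (which by definition would require $a \sim_i b$ for \emph{all} $i \in N$). As $a$ and $b$ were arbitrary distinct discrete assignments, no Pareto indifferent pair exists.

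The only point requiring care — and it is a modeling observation rather than a genuine obstacle — is the passage from the abstract "alternative'' of the general social choice framework to the concrete object in the assignment domain: one must confirm that an agent's preference over an assignment is exactly the preference over the single object she receives, with no externalities from the allocations of others. Once this is granted, strictness of each agent's object ranking immediately rules out indifference between any two distinct matchings, and the statement follows with no further computation.
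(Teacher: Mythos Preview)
Your proof is correct and is essentially the same argument as the paper's, just phrased contrapositively: the paper assumes two assignments $M,M'$ are Pareto indifferent and concludes each agent receives the same object so $M=M'$, while you start from $a\neq b$, find an agent receiving different objects, and use strictness to break indifference. The key step in both is identical.
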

			\begin{proof}
				Consider two discrete assignments $M$ and $M'$ such that all agents are indifferent among them. Then this means that each agent gets the same item in both $M'$ and $M'$. But this implies that $M'=M$.
				\end{proof}
				
				\begin{corollary}[\citet{DDY15a}]
			If preferences are strict, the only undominated probabilistic assignments are the Pareto efficient deterministic assignments. 
								\end{corollary}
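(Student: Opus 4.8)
The plan is to specialize the second theorem above to the random assignment domain and read off the corollary as a direct consequence. First I would record the embedding that underlies the whole note: the random assignment setting is the instance of the social choice setting in which the alternative set $A$ is the collection of deterministic (discrete) assignments, and a lottery in $\Delta(A)$ is precisely a probabilistic assignment. Under this identification the relation ``\strong dominates'' and the property ``\strong-efficient'' translate verbatim, so an \emph{undominated probabilistic assignment} is exactly an \strong-efficient lottery, and a \emph{deterministic assignment} is exactly a degenerate lottery.

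Next I would discharge the hypothesis of the second theorem, which requires $A$ to contain no Pareto indifferent alternatives. The immediately preceding lemma supplies this: with strict preferences no two distinct discrete assignments can be Pareto indifferent, since Pareto indifference would force every agent to receive the same object in both, hence force the two assignments to coincide. Having verified the hypothesis, the theorem yields that a lottery is \strong-efficient iff it is ex post efficient and degenerate, which translates back to: a probabilistic assignment is undominated iff it is ex post efficient and deterministic.

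It then remains only to reconcile ``ex post efficient and deterministic'' with ``Pareto efficient deterministic assignment.'' For a degenerate lottery supported on a single alternative $a$, ex post efficiency is exactly the Pareto optimality of $a$, so an ex post efficient degenerate lottery is precisely a lottery placing probability one on a Pareto efficient assignment. Chaining the three steps gives the claimed set equality between the undominated probabilistic assignments and the Pareto efficient deterministic assignments.

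I do not anticipate a genuine mathematical obstacle, since every ingredient is already in hand: the second theorem, the no-Pareto-indifference lemma, and the definitions of degeneracy and ex post efficiency. The only care required is terminological, namely keeping the dictionary between the abstract social-choice vocabulary and the assignment vocabulary consistent---matching ``undominated'' with ``\strong-efficient,'' ``deterministic assignment'' with ``degenerate lottery,'' and ``Pareto efficient'' with ``ex post efficient on a singleton support.''
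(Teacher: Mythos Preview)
Your proposal is correct and follows essentially the same route as the paper: invoke the preceding lemma to rule out Pareto indifferent discrete assignments, then apply the second theorem to conclude that the \strong-efficient lotteries are exactly the ex post efficient degenerate ones, i.e., the Pareto efficient deterministic assignments. The paper's proof is simply a terser two-sentence version of what you spell out carefully, including the dictionary between the assignment and social-choice vocabularies.
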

\begin{proof}
	No two discrete assignments are completely indifferent for all agents. Hence only random assignment that is \strong-efficient is a discrete Pareto optimal assignment. 
	\end{proof}

%
%
%
%

		\subsection*{Acknowledgments}
		NICTA is funded by the Australian Government through the Department of Communications and the Australian Research Council through the ICT Centre of Excellence Program.


\end{document}